\documentclass{article} 
\usepackage{nips15submit_e,times}

\setcounter{secnumdepth}{5}

\usepackage{hyperref}
\usepackage{url}
\usepackage[utf8]{inputenc}
\usepackage{amsmath,amssymb}
\newcommand{\vertiii}[1]{{\left\vert\kern-0.25ex\left\vert\kern-0.25ex\left\vert #1 
    \right\vert\kern-0.25ex\right\vert\kern-0.25ex\right\vert}}
\usepackage{bbm}
\usepackage{graphicx}
\usepackage{subfig}
\usepackage{caption}
\usepackage{algorithm}
\usepackage{algorithmic}
\usepackage{epsfig}
\usepackage{amsthm}
\usepackage{breqn} 
\usepackage{changebar}
\usepackage{epstopdf}
\usepackage{tabularx}
\usepackage{wrapfig}
\usepackage{enumitem}
\usepackage{mathrsfs}

\newtheorem{theorem}{Theorem}
\newtheorem{remark}{Remark}

\newtheorem{example}{Example}

\newtheorem{definition}{Definition}

\usepackage{amssymb,amsmath}

\input xy
\xyoption{all}

\begin{document}

\title{Identifying Nonlinear 1-Step Causal Influences in Presence of Latent Variables}

\maketitle
\begin{abstract}
We propose an approach for learning the causal structure in stochastic dynamical systems with a $1$-step functional dependency in the presence of latent variables.
We propose an information-theoretic approach that allows us to recover the causal relations among the observed variables as long as the latent variables evolve without exogenous noise. We further propose an efficient learning method based on linear regression for the special sub-case when the dynamics are restricted to be linear. We validate the performance of our approach via numerical simulations. 
\end{abstract}

\section{Introduction}
Identifying causal influences in a network of time series is one of fundamental problems in many different fields, including social sciences, economics, computer science, and biology. In macroeconomics, for instance, researchers seek to understand what are the factors contributing to economic fluctuations and how these factors interact with each other \cite{Lutkepohl04}. In neuroscience, extensive body of research focuses on learning the interactions between different regions of brain by analyzing neural spike trains \cite{Quinn}.

In 1960's, Granger proposed a definition of causality between random processes \cite{Granger}. The key idea of his definition is that if a process $X_2(t)$ causes another process $X_1(t)$, then knowing the past of $X_2(t)$ up to time $t$ must aid in predicting $X_1(t)$. In particular, let $\Sigma_{X_1}(h|\Omega_t)$ be the mean square error (MSE) of the optimal $h$-step predictor of a random process $X_1(t)$ at time $t$ given information $\Omega_t$. Process $X_2(t)$ is said to Granger cause process $X_1(t)$ if:
\begin{equation}
\exists h>0, \mbox{ $s.t.$ } \Sigma_{X_1}(h|\Omega_t)<\Sigma_{X_1}(h|\Omega_t\backslash \{X_2(s)\}_{s=0}^t),
\end{equation}
where the set $\Omega_t$ contains all information in the universe related to the past and the present of $X_1(t)$. We also say that the process $X_2(t)$ has a 1-step cause on $X_1(t)$ if the above inequality holds for $h=1$. In other words, considering $X_2(t)$ in the set $\Omega_t$ improves prediction of $X_1(t+1)$. 

Granger's definition of causality is consistent with the belief that a cause cannot come after the effect, but it is not practical in some settings because it requires knowledge of the entire set $\Omega_t$. To put it differently, it is hard to identify and account for all parts of universe that are related to a specific process $X_1(t)$. Hence, only the available information related to $X_1(t)$ is considered in practice \cite{Lutkepohl}. To see what may go wrong in such a situation, consider the following linear model with three state variables:
\begin{equation}
\begin{bmatrix}
X_1(t)\\X_2(t)\\Z(t)
\end{bmatrix}
=\begin{bmatrix}
0& 0&0.5\\
0.5&0.1&0.9\\
0.9&0&0.5
\end{bmatrix}
\begin{bmatrix}
X_1(t-1)\\X_2(t-1)\\Z(t-1)
\end{bmatrix}
+
\begin{bmatrix}
\omega_{1}(t)\\\omega_{2}(t)\\0
\end{bmatrix},
\end{equation}
where $X_1(t)$ and $X_2(t)$ are observable but $Z(t)$ is latent. Let $\omega_{1}(t)$ and $\omega_{2}(t)$ be i.i.d random variables with the same variance. If we fit a linear model only on $X_1(t)$ and $X_2(t)$ without considering $Z(t)$, our estimation of upper left $2\times 2$ submatrix would be: $[0.06,0.32;0.61,0.69]$. This result implies that $X_2(t)$ is a 1-step cause of $X_1(t)$ with the strength $0.32$ which is wrong. The concept of Granger causality can be generalized to nonlinear setting using an information-theoretic quantity ``directed information" \cite{Marko}. Still the problem caused by latent processes persists in that setting as well.


Identifying causal relations between random variables has been studied in the presence of latent variables to some extent. For instance, Elidan et al. proposed an algorithm based on expectation maximization (EM) to estimate the parameters of their model by fixing the number of latent variables and also the structural relationships between latent and observed variables \cite{elidan2007ideal}. 
 Chandrasekaran et al. \cite{chandrasekaran2010latent} presented a tractable convex program based on regularized maximum likelihood for recovering causal relations for a model where the latent and observed variables are jointly Gaussian, and the conditional statistics of the observed variables given the latent variables is a sparse graph. A well-known approach for learning latent Markov models uses quartet-based distances to discover the structure \cite{jiang,erdos}. In most of quartet-based solutions, a set of quartets is constructed for all subsets of four observable variables and then quartets are merged to form a tree structure.

In recent years, there has been an increasing interest in inferring causal relations in random processes. Jalali and Sanghavi showed that 1-step causal relations between observed variables can be identified in a Vector Auto-Regressive (VAR) model assuming that connections between observed variables are sparse and each latent variable interacts with many observed variables \cite{Jalali}. In \cite{Geiger}, Geiger et al. showed that identifying 1-step causes between observed variables is possible under some algebraic conditions on the transition matrix of VAR model. Recently, Etesami et al. studied a network of processes with polytree structure and introduced an algorithm that can learn latent polytrees using a discrepancy measure \cite{Etesami}.

In this paper, we propose an information-theoretic criteria for identifying the causal relations in a general model of stochastic dynamical systems without restricting the mapping functions (say to linear mappings) or the underlying structure (e.g., being a tree) among the observed nodes also when there is no exogenous noise in the latent part.
We propose an efficient method to identify functional dependencies for the special case of linear mappings. We further demonstrate the applicability of this criteria though simulation results for both linear and nonlinear cases.

The paper is organized as follows. In Section \ref{sec:def}, we provide the preliminary definitions and describe the system model. In Section III, we present the main result and study the special restriction of it to linear models. We provide our simulation results in Section IV. Finally, we conclude in Section V.

\vspace{.3cm}

\section{Problem Definition}\label{sec:def}
In this section, after some notational conventions, the model of stochastic dynamical system is presented. Afterwards, we present our definition of 1-step functional dependency between the processes for this model.

\subsection{Notations}
Any $n\times 1$ vector with with entries $[V_1(t);\cdots;V_n(t)]$ is denoted by $\vec{V}(t)$. 
We denote the $t$-th random variable in the $i$-th process by $V_{i}(t)$. We use underlined characters to represent a collection of processes, for example $\underline{V}_{\mathcal{K},0}^t$ is used to denote a set of random processes with index set $\mathcal{K}$ from time $0$ up to time $t$. For $\mathcal{K}=\{1,...,n\}$, we denote $\underline{V}_{\mathcal{K},0}^t$ by $\underline{V}_{0}^t$. We also define: $-j:=\{1,\cdots,n\}\backslash \{j\}$. The identity matrix of size $n$ is shown by $I_{n\times n}$. We denote $(i,j)$ entry of a matrix $A$ by $A(i,j)$. 

In a directed graph $\overrightarrow{G}=(V,\overrightarrow{E})$ that is characterized by a set $V$ of vertices (or nodes) and a set of ordered pairs of vertices, called arrows (or edges) $\overrightarrow{E}\subset V\times V$, we denote the set of \textit{parents} of a node $v$ by $\mathcal{PA}(v)$ and define it as 
$
\mathcal{PA}(v):=\{u\in V: (u,v)\in\overrightarrow{E}\}
$.


\subsection{System model}
Consider a dynamical system described by $n+m$ states
$\vec{V}(t)=[V_1(t),\cdots,V_{n+m}(t)]$ in which the first $n$ processes, denoted by $\vec{X}=[X_1,...,X_n]$ are observable states and the rest which denoted by $\vec{Z}=[Z_1,...,Z_m]$ are latent. More precisely, the joint dynamic of the system is given by:
\begin{eqnarray}\label{Maineq}
\left\{
\begin{array}{ll}
X_i(t)= F_i(\vec{X}(t-1),\vec{Z}(t-1))+\omega_i(t),& 1\leq i\leq n,\\
\vec{Z}(t)=G(\vec{X}(t-1),\vec{Z}(t-1)),&
   \end{array} 
\right. 
\end{eqnarray}
where the exogenous noises $\{\omega_i(t)\}$ are i.i.d. with mean zero. $F_i:\mathbb{R}^{n+m}\rightarrow \mathbb{R}$, $G:\mathbb{R}^{n+m}\rightarrow \mathbb{R}^m$ are mapping functions that belong to appropriately constrained class of functions. Furthermore, we assume that $\vec{Z}(0)$ is a vector of unknown but fixed values. 
The goal of this work is to identify the causal structure among the observed processes $\vec{X}$ given their realizations. Next, we formally introduce what we mean by a causal structure of a dynamical system.

\subsection{Causal Structural Graph}

In dynamical systems with functional dependencies, there is a natural notion of influence among the processes, in the sense that process $V_j$ causes process $V_i$, if $V_i$ is a function of $V_j$. Such dependencies has been studied in the literature \cite{etesami2016measuring}.  
Adopting the definition of functional dependency in \cite{etesami2016measuring}, we define the causal structure of the system in (\ref{Maineq}) as follows.

Random process $V_i$ 1-step functionally depends on process $V_j$ over the time horizon $[0,T]$, if changing the value of $V_j(t-1)$ while keeping all the other variables fixed results in a change in $V_i(t)$ for some time $0<t\leq T$. Next, we present our formal definition of functional dependencies in systems whose joint dynamics is described by (\ref{Maineq}).

\begin{definition}
We say $V_j$ 1-step functionally influences $V_i$ if and only if $\alpha_{i,j}:=1/T\sum_{t=0}^{T}\alpha_{i,j}(t)>0$, where 
\begin{equation}\label{eq:alpha}
\alpha_{i,j}(t):=\!\!\!\sup_{\substack{v_j, v'_j, \vec{v}_{-j}}}|F_i(v_j,\vec{v}_{-j})-F_i(v^{\prime}_j,\vec{v}_{-j})|,
\end{equation} 
$(v_j,\vec{v}_{-j})$ and $(v'_j,\vec{v}_{-j})$ are two realizations of $(V_j(t-1),\vec{V}_{-j}(t-1))$.
\end{definition}
In order to visualize the causal structure in (\ref{Maineq}), we introduce a directed graph whose nodes represent random processes and there is an arrow from node $j$ to nodes $i$, if $V_j$ 1-step functionally influences $V_i$. 
 \begin{example}\label{ex:1}
Consider a causal system with 3 processes such that their joint dynamic is given by:
\begin{align}\notag
&X_1(t)= e^{-|X_1(t-1)+X_2(t-1)|}-e^{-|Z(t-1)|}/5+\omega_1(t),\\ \notag
&X_2(t)= \sqrt{|X_2(t-1)|/2}+\omega_2(t),\\ \notag
&Z(t)= X_2(t-1)Z(t-1),
\end{align}
where $\omega_i$s are independent exogenous noises. Figure \ref{fig:3345} depicts the functional dependency graph of this system.
\begin{figure}
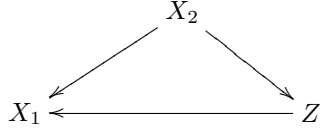

\hspace{5cm}
\xygraph{ !{<0cm,0cm>;<1.9cm,0cm>:<0cm,1.5cm>::} 
!{(-.7,1) }*+{X_{1}}="x1"
!{(.4,1.9) }*+{X_{2}}="x2"
 !{(1.3,1)}*+{Z}="z"
"x2":"x1" "x2":"z"  "z":"x1" }
   \caption{Functional dependency graph of Example \ref{ex:1}.}\label{fig:3345}
\end{figure}
\end{example}

Directed Information Graphs (DIGs) are another type of graphical models that encode statistical dependencies in dynamical systems \cite{dimitrov2011information}. These graphs are defined using an information-theoretic measure, the ``conditional directed information" \cite{massey1990causality,quinn2013efficient}. 
The relationship between the functional dependencies in a stochastic dynamical system and their corresponding DIG has been studied in \cite{etesami2016measuring}.  

For the sake of completeness, we present the definition of DIG. Consider two random processes $V_{i}$ and $V_{j}$ and a set of indices $\mathcal{K}$ such that $\mathcal{K}\subseteq\{1,...,n\}\setminus\{i,j\}$, then the conditional directed information from $V_{j}$ to $V_{i}$, given $\underline{V}_{\mathcal{K}}$ is defined as:
\begin{align}\label{cdif}
I(V_{j}\rightarrow V_{i}||\small{\underline{V}}_{\mathcal{K}}):=\mathbb{E}_{P_{\small{\underline{V}}_{\mathcal{K}\cup\{i,j\}}}}\left[\log \dfrac{dP_{V_{i}||V_{j},\small{\underline{V}}_{\mathcal{K}}}}{dP_{V_{i}||\small{\underline{V}}_{\mathcal{K}}}}\right],
\end{align}
where $\frac{dP_{\small{{\underline{V}}}}}{dQ_{\small{{\underline{V}}}}}$ is the Radon-Nikodym derivative \cite{royden1988real} and $P_{V_i||\underline{V}_{\mathcal{K}}}$ denotes the causal conditioning defined as
$
P_{V_{i}||\underline{V}_{\mathcal{K}}}:=\prod_{t\geq 1}P_{V_{i}(t)|V_{i}^{t-1},\underline{V}^{t-1}_{\mathcal{K}}}.
$

\begin{definition}\label{didef}  \cite{quinn2011equivalence}
A \textit{directed information} graph (DIG) is a directed graph, $\overrightarrow{G}_{DI}$, over a set of random processes $\underline{V}$. Node $i$ represents the random process $V_{i}$; there is an arrow from $j$ to $i$ for $i,j\in \{1,...,n\}$ if and only if:
\begin{equation}\label{crt}
I(V_{j}\rightarrow V_{i}||\ \underline{V}_{-\{i,j\}})>0.
\end{equation}
\end{definition}

Note that in the definition of DIG, it is assumed that there are no latent processes. Thus as demonstrated in the example below, when a subset of processes is not observable (as in our model), the corresponding DIG may not encode the 1-step causal relationships accurately.

\begin{example}
Consider the following joint dynamics: 
\begin{align*}
&X_1(t)=X_1(t-1)/3+W_1(t),\\
&Z(t)=X_1(t-1),\\ 
&X_2(t)=Z(t-1)/3+W_2(t),
\end{align*}
where $\{W_1, W_2\}$ are independent exogenous noises. The corresponding DIG of this system when all processes are observed is $X_1\rightarrow Z\rightarrow X_2$, and when $Z$ is latent, it is $X_1\rightarrow X_2$. But we know that there is no 1-step functional dependency between $X_1$ and $X_2$.
\end{example}

\begin{definition}
 A joint distribution $P_{\underline{V}}$ is called positive if there exists a reference measure $\phi$ such that $P_{\underline{V}}\ll\phi$ and $\frac{dP_{\small{\underline{V}}}}{d\phi} > 0$.
\end{definition}

\begin{remark}
In addition to requiring no latent processes, DIGs recover the structure correctly when underlying distribution is positive. This is to avoid degenerate cases that arise with deterministic relationships. For instance, suppose $X$ and $Y$ are two random processes such that  $Y=\psi(X)$ for some deterministic function $\psi$. Then $P_{X,Y}$ is not positive since the distribution of $Y$ given $X$ is a point mass.
\end{remark}

Note that our model in (\ref{Maineq}) does not satisfy the non-degeneracy assumption. 
This is because in this model the hidden variables are a deterministic functions of the other processes. 
Yet as we will show next, the 1-step causal structure between the observed processes is unique and recoverable as long as the marginal distribution of the observed processes is positive.



\section{Main Result}
Herein, we introduce our approach for learning the 1-step functional dependencies among the observed variables given their realizations. This approach does not require any prior knowledge about the number of latent process nor functions $\{F_i\}$ and $G$. 

\begin{theorem}
Consider the dynamical system in (\ref{Maineq}) and assume that the marginal distribution of the observed variables $P_{\underline{X}}$ is positive. Then $\alpha_{i,j}(t)=0$ if and only if:
\begin{equation}\label{eqTh1}
I(X_i(t);X_j(t-1)|\underline{X}_0^{t-1}\backslash \{X_j(t-1)\})=0.
\end{equation}
\label{Th1}
\end{theorem}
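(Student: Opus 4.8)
The plan is to eliminate the latent process by exploiting the two structural hypotheses of (\ref{Maineq}): that $G$ carries no exogenous noise and that $\vec{Z}(0)$ is fixed. First I would show by induction on $s$ that $\vec{Z}(s)$ is a deterministic function of the observed history $\underline{X}_0^{s-1}$. Indeed $\vec{Z}(1)=G(\vec{X}(0),\vec{Z}(0))$ depends only on $\vec{X}(0)$ because $\vec{Z}(0)$ is a fixed constant, and the recursion $\vec{Z}(s)=G(\vec{X}(s-1),\vec{Z}(s-1))$ then propagates the claim. Writing $\vec{Z}(t-1)=H(\underline{X}_0^{t-2})$ and substituting into (\ref{Maineq}) gives $X_i(t)=F_i(\vec{X}(t-1),H(\underline{X}_0^{t-2}))+\omega_i(t)=:\tilde{F}_i(\underline{X}_0^{t-1})+\omega_i(t)$, a function of the observed past plus fresh noise. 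The key observation is that $X_j(t-1)$ enters $\tilde{F}_i$ only through the direct argument $\vec{X}(t-1)$, never through $H$, which uses only times $\leq t-2$; hence $\tilde{F}_i$ depends on $X_j(t-1)$ exactly when $F_i$ depends on its $j$-th coordinate, and over the achievable configurations this is precisely the condition $\alpha_{i,j}(t)>0$.

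Next I would reduce (\ref{eqTh1}) to a statement about $\tilde F_i$. Set $\mathcal{C}:=\underline{X}_0^{t-1}\setminus\{X_j(t-1)\}$; since $\mathcal{C}$ contains $\underline{X}_0^{t-2}$ and every $X_k(t-1)$ with $k\neq j$, conditioning on $\mathcal{C}=c$ fixes both $\vec{X}_{-j}(t-1)$ and $\vec{Z}(t-1)$, so that $X_i(t)=g_c(X_j(t-1))+\omega_i(t)$ with $g_c(a):=F_i(a,\vec{x}_{-j},\vec{z})$ determined by $c$. The conditional mutual information vanishes iff $X_i(t)\perp X_j(t-1)\mid\mathcal{C}$, and because $\omega_i(t)$ is independent of $\underline{X}_0^{t-1}$, the conditional law of $X_i(t)$ given $(X_j(t-1)=a,\mathcal{C}=c)$ is the noise law translated by $g_c(a)$. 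A genuine probability distribution on $\mathbb{R}$ cannot be invariant under a nonzero shift: comparing characteristic functions, $e^{isg_c(a)}\phi_{\omega_i}(s)$ agrees for two values of $a$ only where $\phi_{\omega_i}\neq 0$, and since $\phi_{\omega_i}$ is continuous with $\phi_{\omega_i}(0)=1$ this forces the two translations to coincide. Hence independence given $\mathcal{C}=c$ is equivalent to $g_c$ being constant over the conditional support of $X_j(t-1)$.

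The forward implication is then immediate: if $\alpha_{i,j}(t)=0$ then $F_i$ is constant in its $j$-th argument everywhere, so every $g_c$ is constant and (\ref{eqTh1}) holds. For the converse I would argue by contraposition, and this is where the main obstacle lies. From $\alpha_{i,j}(t)>0$ I obtain an achievable configuration $\vec{v}_{-j}=(\vec{x}^{\ast}_{-j},\vec{z}^{\ast})$ and values $v_j\neq v'_j$ with $F_i(v_j,\vec{v}_{-j})\neq F_i(v'_j,\vec{v}_{-j})$, but conditional independence only delivers ``$g_c$ constant for $P_{\mathcal{C}}$-almost every $c$ and $P_{X_j(t-1)\mid\mathcal{C}}$-almost every $a$.'' The difficulty is upgrading this almost-everywhere statement to the pointwise supremum defining $\alpha_{i,j}(t)$, since a priori the single witness could sit on a null set. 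This gap is exactly what positivity of $P_{\underline{X}}$ closes: it guarantees that the histories producing $\vec{X}_{-j}(t-1)=\vec{x}^{\ast}_{-j}$ and, through $H$, $\vec{Z}(t-1)=\vec{z}^{\ast}$ form a positive-probability event, and that on this event the conditional law of $X_j(t-1)$ has full support and therefore charges both $v_j$ and $v'_j$. On such $c$ the function $g_c$ is nonconstant over the conditional support, so $X_i(t)\not\perp X_j(t-1)\mid\mathcal{C}$ and the left side of (\ref{eqTh1}) is strictly positive. The delicate bookkeeping is that varying $X_j(t-1)$ must not perturb $\vec{Z}(t-1)$; this decoupling holds because $\vec{Z}(t-1)=H(\underline{X}_0^{t-2})$ ignores time $t-1$, and it is precisely here that the no-exogenous-noise assumption on the latent dynamics is indispensable.
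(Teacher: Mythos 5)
Your proposal is correct and follows the same skeleton as the paper's proof: eliminate the latent process by observing that the absence of exogenous noise and the fixed $\vec{Z}(0)$ make $\vec{Z}(t-1)$ a deterministic function of $\underline{X}_0^{t-2}$, reduce the conditional law of $X_i(t)$ given $\underline{X}_0^{t-1}\setminus\{X_j(t-1)\}$ to a family $g_c(X_j(t-1))+\omega_i(t)$, and invoke positivity for the converse. The one genuine difference is the device used in the converse. The paper works with first moments: conditional independence forces $\mathbb{E}\{X_i(t)\mid\underline{X}_0^{t-1}\}=\mathbb{E}\{X_i(t)\mid\underline{X}_0^{t-1}\setminus\{X_j(t-1)\}\}$, and since the left side equals $F_i(\vec{x}(t-1),\vec{z}(t-1))$ (the noise being zero-mean) while the right side cannot depend on $x_j(t-1)$, the map $x_j\mapsto F_i$ must be constant. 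You instead compare the full conditional laws via characteristic functions to show that a translation family $\{\mathcal{L}(\omega_i)+g_c(a)\}_a$ collapses only if $g_c$ is constant. Both arguments are valid here; yours is marginally more robust in that it does not pass through the conditional mean, while the paper's is shorter. You are also more explicit than the paper about the gap between the almost-everywhere statement delivered by vanishing conditional mutual information and the pointwise supremum defining $\alpha_{i,j}(t)$; the paper closes this with the same appeal to positivity (so that $X_j(t-1)$ is not a deterministic function of the remaining history and its conditional law has full support), but does not spell out, any more than you do, that one additionally needs some regularity of $F_i$ (e.g.\ continuity, implicit in the paper's ``appropriately constrained class of functions'') so that a pointwise witness $F_i(v_j,\vec v_{-j})\neq F_i(v'_j,\vec v_{-j})$ forces nonconstancy on a set of positive conditional measure. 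With that caveat, which applies equally to the published proof, your argument is complete.
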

\begin{proof}
First, we prove that if $\alpha_{i,j}(t)=0$ then (\ref{eqTh1}) holds. Suppose that $X_i(t)$ does not 1-step functionally depend on $X_j(t-1)$. According to (\ref{Maineq}), the latent vector $\vec{Z}(t)$ can be determined recursively as a function of $\underline{X}_0^{t-1}$ and $\vec{Z}(0)$. We denote this by $\vec{Z}(t)=\Psi_t(\underline{X}_0^{t-1},\vec{Z}(0))$.
Therefore, the entropy of $X_i(t)$ given $\underline{X}_0^{t-1}$ will be:
\begin{align}\label{eqHX}\notag
H(X_i&(t)|\underline{X}_0^{t-1})=H(F_i(\vec{X}(t-1),\vec{Z}(t-1))+\omega_i(t)|\underline{X}^{t-1}_0),\\ \notag
&= H(F_i(\vec{X}(t-1),\Psi_{t-1}(\underline{X}^{t-2}_0,\vec{Z}(0)))+\omega_i(t)|\underline{X}^{t-1}_0),\\ 
&= H(\omega_i(t)|\underline{X}^{t-1}_0)=H(\omega_i(t)).
\end{align}
The last equation holds because $F_i(\vec{X}(t-1),\Psi_{t-1}(\underline{X}_0^{t-2},\vec{Z}(0)))$ is a deterministic function of  $\underline{X}_0^{t-1}, \vec{Z}(0)$, and $\omega_i(t)$ is independent of $\underline{X}_0^{t-1}$. Furthermore, we have:
\begin{equation}
\begin{split}
H(X_i(t)&|\underline{X}_0^{t-1}\backslash \{X_j(t-1)\})\\
&=H(F_i^{\prime}(X_j(t-1))+\omega_i(t)|\underline{X}_0^{t-1}\backslash \{X_j(t-1)\}),
\end{split}
\end{equation}
where $F_i^{\prime}(X_j(t-1))$ is a uni-variate function obtained from $F_i(\vec{X}(t-1),\Psi_{t-1}(\underline{X}^{t-2}_0,\vec{Z}(0)))$ by determining the values of $\underline{X}_0^{t-1}\backslash X_j(t-1)$. But $F_i^{\prime}(X_j(t-1))$ does not change  by varying $X_j(t-1)$ since we assumed $\alpha_{i,j}(t)=0$. Hence, the above equation is equal to $H(\omega_i(t))$ and by comparing with (\ref{eqHX}), we can deduce that (\ref{eqTh1}) holds.

For the converse, note that $X_i(t)$ and $X_j(t-1)$ are independent given $\underline{X}_0^{t-1}\backslash \{X_j(t-1)\}$ according to (\ref{eqTh1}). Consequently, we have:
\begin{equation}
\mathbb{E}\{X_i(t)|\underline{X}_0^{t-1}\}=\mathbb{E}\{X_i(t)|\underline{X}_0^{t-1}\backslash \{X_j(t-1)\}\}
\label{EX}
\end{equation}

For any realization of $\underline{X}_0^{t-1}$ like $\underline{x}_0^{t-1}$, the left hand side of the above equation is equal to:
\begin{equation}
\mathbb{E}\{X_i(t)|\underline{X}_0^{t-1}=\underline{x}_0^{t-1}\}=F_i(\vec{x}(t-1),\vec{z}(t-1)).
\label{EX2}
\end{equation}
where $\vec{z}(t-1)=\Psi_{t-1}(\underline{x}_0^{t-2},\vec{Z}(0))$. Since the joint distribution of the observed processes $P_{X}$ is positive, we know that $X_j(t-1)$ cannot be written as a deterministic function of $\underline{X}_0^{t-1}\backslash \{X_j(t-1)\}$. Thus, the right hand of (\ref{EX}) does not depends on $X_i(t-1)$. From this fact and (\ref{EX2}), we can conclude that $F_i(\vec{X}(t-1),\vec{Z}(t-1))$ is not a function of $X_j(t-1)$ for any realization of $\underline{X}_0^{t-1}$ and thus $\alpha_{i,j}(t)=0$.
\end{proof}

This result can be used to recover the 1-step causal structure of the observed processes in (\ref{Maineq}) given $X_0^t$. To do so, one can estimate the conditional mutual information in (\ref{eqTh1}) for all $t$. If (\ref{eqTh1}) holds, then we declare that there is no 1-step dependency from $X_j$ to $X_i$.
Next, we propose an efficient method to learn the 1-step causal structure of the observed processes in (\ref{Maineq}) when $F_i$s and $G$ are linear functions.

\subsection{The Linear Model}
Suppose $\{F_i\}$s and $G$ are linear functions, then the equations in (\ref{Maineq}) can be rewritten as follows:
\begin{equation}
\begin{bmatrix}
\vec{X}(t)\\\vec{Z}(t)
\end{bmatrix}
=\begin{bmatrix}
A_{11}& A_{12}\\
A_{21} & A_{22}
\end{bmatrix}
\begin{bmatrix}
\vec{X}(t-1)\\
\vec{Z}(t-1)
\end{bmatrix}
+
\begin{bmatrix}
\vec{\omega}(t)\\
\vec{0}_{m\times 1}
\end{bmatrix},
\label{Varmodel}
\end{equation}
where $[A_{11}]_{n\times n}$,  $[A_{12}]_{n\times m}$, $[A_{21}]_{m\times n}$, and $[A_{22}]_{m\times m}$ denote the coefficient matrices. We also define $A=[A_{11},A_{12};A_{21},A_{22}]$.
The functional dependency of state vector $\vec{X}(t)$ on its history $\underline{X}_0^{t-1}$ and also $\vec{\omega}(t)$, and $\vec{Z}(0)$ for $t>1$ can be written as follows:
\begin{equation}
\begin{split}
\vec{X}(t)&=\displaystyle\sum_{k=0}^{t-1} A^*_k\vec{X}(t-1-k) + A_{12}A_{22}^{t-1} \vec{Z}(0)+\vec{\omega}(t), 
\end{split}
\label{EqVARre}
\end{equation}
where $A_0^*=A_{11}$ and $A_k^*=A_{12}A_{22}^{k-1}A_{21},k\geq 1$. Now, suppose that information-theoretic criteria in (\ref{eqTh1}) is zero. By the same arguments in the proof of Theorem \ref{Th1}, we can show that the following term is zero:
\begin{equation}
\begin{split}
&\mathbb{E}\{X_i(t)|\underline{x}_0^{t-1}\}-  \mathbb{E}\{X_i(t)|\underline{x}_0^{t-1}\backslash \{x_j(t-1)\}\}=\\
&=A_{11}(i,j)\left(x_j(t-1)-\mathbb{E}\{X_j(t-1)|\underline{x}_0^{t-1}\backslash \{x_j(t-1)\}\}\right)
\end{split}
\end{equation}
for any realization of $\underline{X}_0^{t-1}=\underline{x}_0^{t-1}$. Since $P_X$ is positive, we can deduce that $A_{11}(i,j)=0$. Consequently, learning the 1-step causal structure among the observed processes reduces to determining the support of $A_{11}$.

Assume that the support of $A_{22}$ corresponds to an acyclic directed graph, i.e. there exists an $l>0$ such that  $A_{22}^l=0$. Under this condition, the equation (\ref{EqVARre}) can be simplified as:
\begin{equation}
\begin{split}
\vec{X}(t)=&\displaystyle\sum_{k=0}^l A^*_k\vec{X}(t-1-k)+\vec{\omega}(t),\quad  t\geq l+1.
\end{split}
\label{ARl}
\end{equation}

The above equation can be interpreted as a VAR model of order $l+1$. Hence, all matrices $\{A_k^*\}$ can be obtained by doing multivariate least square estimation \cite{Lutkepohl}. Moreover, coefficients in the VAR model can be checked for zero constraints by Wald test \cite{Lutkepohl}. Thus, we can check the information-theoretic criteria merely performing a Wald test.

\vspace{.3cm}

\section{Experimental Results}\label{sec:exp}
In this section, we utilize the method described in previous part for network identification problem in consensus protocols \cite{nabi2014network}. In control systems, a well-known approach for network identification is based on running a series of ``node-knockout" experiments in which variables are sequentially forced to be zero without being removed from the network \cite{nabi2014network,shahrampour2015topology}. The main drawback of this approach is that we need to intervene in the system.  Here, we will show that the direct edges between observed nodes can be detected just by analyzing the time-series of observed processes.

Consider the weighted consensus protocol within a system with $n+m$ nodes:
\begin{equation}
V_i(t)=w_{ii}V_i(t-1)+\displaystyle\sum_{j\neq i} w_{ij} (V_j(t-1)-V_i(t-1))+B_i \omega_i(t),
\end{equation}
where $V_i(t)$ represents the state of node $i$ at time $t$ such as its speed, heading, or position, and the weight $w_{ij}$ denotes the weight on the edge $(i,j)$. The first $n$ state variables correspond to states of observed nodes and the rest belong to hidden nodes. We are trying to find all directed edges (with nonzero weight) between observed nodes by injecting the white noise $\omega_i(t)$ into the observed node $i$, i.e. $B_i=1$ if $V_i$ is an observed node and $B_i=0$ otherwise. In fact, this problem can be reformulated to the form in (\ref{Varmodel}) such that:
\begin{equation}
A(i,k)=\begin{cases}
w_{ik},& \mbox{if $i\neq k$},\\
\begin{small} w_{ii}-\sum_{j\neq i} w_{ij}\end{small} ,& \mbox{otherwise,}
\end{cases}
\end{equation}
where $[V_1(t),\cdots,V_n(t)]=[X_1(t),\cdots,X_n(t)]$, and $[V_{n+1}(t),\cdots,V_{n+m}(t)]=[Z_1(t),\cdots,Z_m(t)]$.
Hence, identifying all directed edges with nonzero weight between observed nodes is equivalent to obtaining the support of matrix $A_{11}$. 

We generated 1000 instances of the linear system with $n=10$ observed nodes and $m=10$ latent nodes. The weight $w_{ij}$ $(i\neq j)$ was selected randomly from the set $\{-b,0,b\}$ with probability $[q,1-2q,q]$ where $q=0.1$ and $b=0.7$ if $i,j$ were hidden. Otherwise, the weight $w_{ij}$ $(i\neq j)$ was chosen randomly from the set $\{-a,0,a\}$ with probability $[p,1-2p,p]$ where $a=0.2$. Moreover, we set $w_{ii}$ to $\sum_{(i,j)\in \vec{E}} w_{ij}$.

In our simulations, we excluded the generated networks which had cycles in the latent part. Furthermore, the noise process $\vec{\omega}(t)$ was chosen as i.i.d $\mathcal{N}(\vec{0}_{n\times 1},\sigma^2 I_{n\times n})$ with $\sigma^2=0.1$. It can be easily seen that the conditional mutual information in (\ref{Maineq}) is equal to:
\begin{equation}
\begin{split}
I(X_i(t);X_j(t-1)&|\underline{X}_0^{t-1}\backslash X_j(t-1))=\\
&=\frac{1}{2}\log \left( 1+\frac{(A_{11}(i,j))^2\sigma_{\omega_j}^2}{\sigma_{\omega_i}^2}\right), t> l
\end{split}
\label{eqsig}
\end{equation}
where $\sigma_{\omega_i}^2$ and $\sigma_{\omega_j}^2$ are the variances of $\omega_i(t)$ and $\omega_j(t)$, respectively. Thus, learning $1$-step functional dependencies, corresponds to finding the support of matrix $A_{11}$, denoted by $Supp(A_{11})$.

In order to obtain nonzero entries in matrix $A_{11}$, we performed a linear regression between $\vec{X}(t+1)$ and data $\underline{X}_{t-l+1}^t$ where $l$ is the lag length. Let $\hat{A}_{11}$ be the output of linear regression for time series of length $10000$. According to Wald test \cite{Lutkepohl}, for large number of samples, we can obtain $Supp(\hat{A}_{11})$ by setting entry $(i,j)$ to one if $|\hat{A}_{11}(i,j)|>a/2$. In Fig. \ref{figs1}, the error $|| Supp(\hat{A}_{11})-Supp(A_{11})||^2_F$ is averaged over generated random matrices where $||.||_F$ is the Frobenius norm of a matrix. As it can be seen, the support of matrix $A_{11}$ can be recovered perfectly as the lag length increases. This trend is expected since the lag length should be at least equal to the order of linear model $l$, in order to have perfect recovery. Moreover, as shown in Fig. \ref{figs1}, for a fixed lag length, the average error is higher for larger $p$. This is because the matrices $\{A_k^*\}$ become more dense for larger $p$ which leads to higher average error when the right lag length is not selected.

\begin{figure}[!t]
\centering
\includegraphics[width=3.5in]{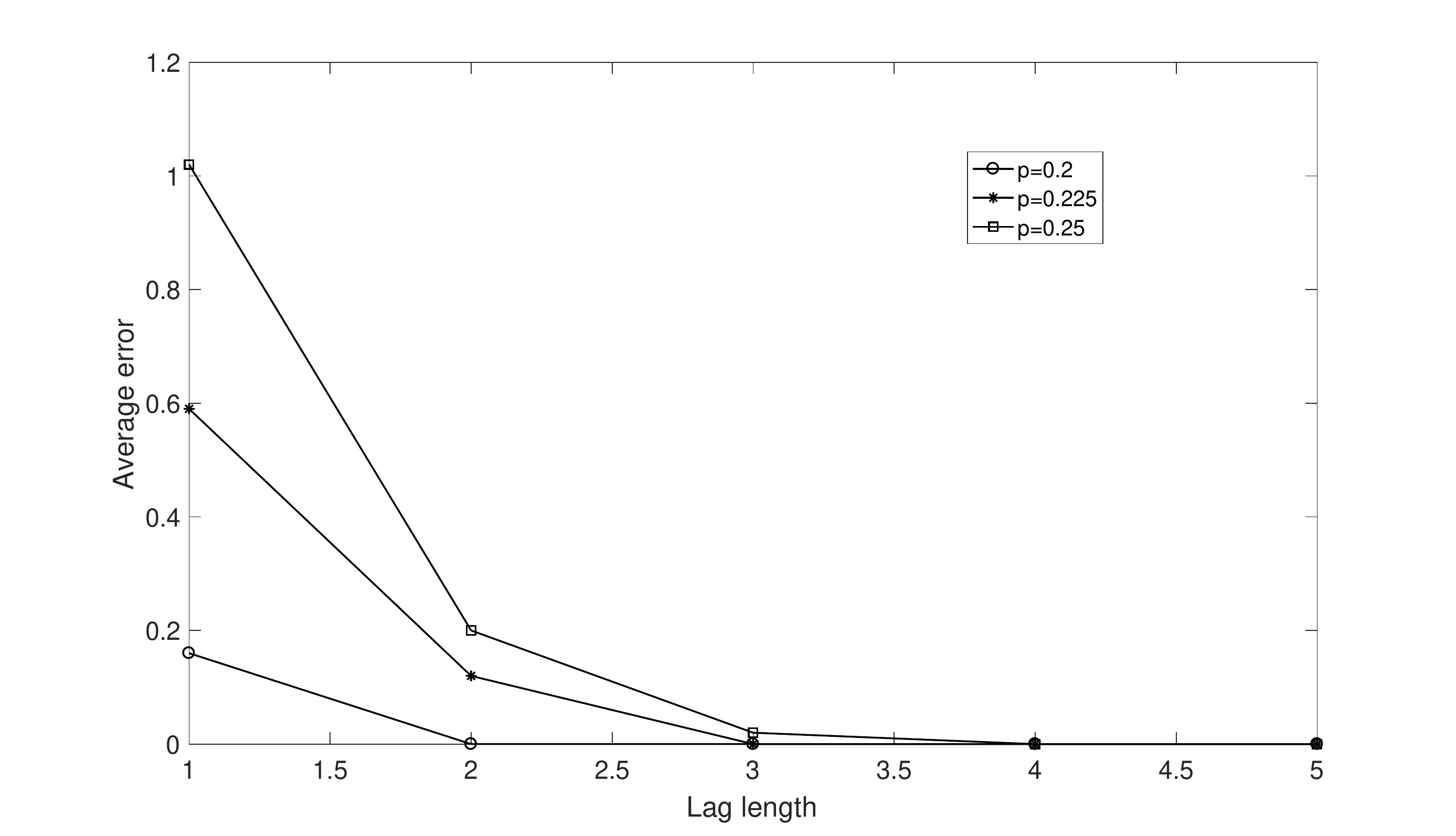}
\caption{Average error versus lag length for different values of parameter $p$.}
\label{figs1}
\end{figure}

We also examined our proposed criteria in a nonlinear system with three state variables with the following joint dynamics:
\begin{eqnarray}\label{Nonlinear}
\left\{
\begin{array}{lll}
X_1(t)&=0.2X_1(t-1)+0.4\sqrt{|Z(t-1)|}+\omega_1(t),\\
X_2(t)&=0.5(X_1(t-1))^2+0.9Z(t-1)+\omega_2(t),\\
Z(t)&=0.9(X_1(t-1))^3+0.4Z(t-1),
   \end{array} 
\right. 
\end{eqnarray}
where $\omega_1(t)$ and $\omega_2(t)$ are i.i.d $\mathcal{N}(0,0.1)$ and $X_1(0),X_2(0)$ have normal distribution with zero mean and unit variance. The quantity in (\ref{Maineq}) can be written as a linear combination of some joint entropies. Hence, we can utilize the $K$-nearest neighbor method of \cite{Singh} to obtain an estimation of the desired quantity. To do so, we generated $1000$ samples of $X_1,X_2,Z$ for $t=0,1$. For $K=10$, the numerical results were: $I(X_1(1);X_2(0)|X_1(0))\approx 0.06$, and $I(X_2(1);X_1(0)|X_2(0))\approx 1.68$. From these results, we can infer that $X_2(t)$ is 1-step functional dependent on $X_1(t-1)$ which is consistent with the system model in (\ref{Nonlinear}).

\vspace{.3cm}

\section{Conclusion}
We proposed an information-theoretic quantity for identifying causal relations among observed variables for a general $1$-step stochastic dynamical system in the presence of latent variables when there exists no exogenous noise in the latent part. It would be interesting to see if by further imposing some additional constraints on the structure of functional dependencies, it would be possible to recover the inter-connections in the latent sub-graph.

\bibliographystyle{plain}
\bibliography{ref}

\begin{thebibliography}{10}

\bibitem{chandrasekaran2010latent}
Venkat Chandrasekaran, Pablo~A Parrilo, and Alan~S Willsky.
\newblock Latent variable graphical model selection via convex optimization.
\newblock In {\em Communication, Control, and Computing (Allerton), 2010 48th
  Annual Allerton Conference on}, pages 1610--1613. IEEE, 2010.

\bibitem{dimitrov2011information}
Alexander~G Dimitrov, Aurel~A Lazar, and Jonathan~D Victor.
\newblock Information theory in neuroscience.
\newblock {\em Journal of computational neuroscience}, 30(1):1--5, 2011.

\bibitem{elidan2007ideal}
Gal Elidan, Iftach Nachman, and Nir Friedman.
\newblock " ideal parent" structure learning for continuous variable bayesian
  networks.
\newblock {\em Journal of Machine Learning Research}, 8(8), 2007.

\bibitem{erdos}
P{\'e}ter~L Erdos, Michael~A Steel, L{\'a}szl{\'o}A Sz{\'e}kely, and Tandy~J
  Warnow.
\newblock A few logs suffice to build (almost) all trees: Part ii.
\newblock {\em Theoretical Computer Science}, 221(1):77--118, 1999.

\bibitem{etesami2016measuring}
Jalal Etesami and Negar Kiyavash.
\newblock Measuring causal relationships in dynamical systems through recovery
  of functional dependencies.
\newblock {\em IEEE Transactions on Signal and Information Processing over
  Networks}, 2016.

\bibitem{Etesami}
Jalal Etesami, Negar Kiyavash, and Todd Coleman.
\newblock Learning minimal latent directed information polytrees.
\newblock {\em Neural Computation}, 2016.

\bibitem{Geiger}
Philipp Geiger, Kun Zhang, Mingming Gong, Dominik Janzing, and Bernhard
  Sch{\"o}lkopf.
\newblock Causal inference by identification of vector autoregressive processes
  with hidden components.
\newblock In {\em Proceedings of 32th International Conference on Machine
  Learning (ICML 2015)}, 2015.

\bibitem{Granger}
Clive~WJ Granger.
\newblock Investigating causal relations by econometric models and
  cross-spectral methods.
\newblock {\em Econometrica: Journal of the Econometric Society}, pages
  424--438, 1969.

\bibitem{Jalali}
Ali Jalali and Sujay Sanghavi.
\newblock Learning the dependence graph of time series with latent factors.
\newblock {\em ICML}, 2012.

\bibitem{jiang}
Tao Jiang, Paul Kearney, and Ming Li.
\newblock A polynomial time approximation scheme for inferring evolutionary
  trees from quartet topologies and its application.
\newblock {\em SIAM Journal on Computing}, 30(6):1942--1961, 2001.

\bibitem{Lutkepohl}
Helmut L{\"u}tkepohl.
\newblock {\em New introduction to multiple time series analysis}.
\newblock Springer Science \& Business Media, 2005.

\bibitem{Lutkepohl04}
Helmut L{\"u}tkepohl and Markus Kr{\"a}tzig.
\newblock {\em Applied time series econometrics}.
\newblock Cambridge university press, 2004.

\bibitem{Marko}
Hans Marko.
\newblock The bidirectional communication theory--a generalization of
  information theory.
\newblock {\em Communications, IEEE Transactions on}, 21(12):1345--1351, 1973.

\bibitem{massey1990causality}
James Massey.
\newblock Causality, feedback and directed information.
\newblock In {\em Proc. Int. Symp. Inf. Theory Applic.(ISITA-90)}, pages
  303--305. Citeseer, 1990.

\bibitem{nabi2014network}
Marzieh Nabi-Abdolyousefi.
\newblock Network identification via node knockout.
\newblock In {\em Controllability, Identification, and Randomness in
  Distributed Systems}, pages 17--29. Springer, 2014.

\bibitem{Quinn}
Christopher~J Quinn, Todd~P Coleman, Negar Kiyavash, and Nicholas~G
  Hatsopoulos.
\newblock Estimating the directed information to infer causal relationships in
  ensemble neural spike train recordings.
\newblock {\em Journal of computational neuroscience}, 30(1):17--44, 2011.

\bibitem{quinn2011equivalence}
Christopher~J Quinn, Negar Kiyavash, and Todd~P Coleman.
\newblock Equivalence between minimal generative model graphs and directed
  information graphs.
\newblock In {\em Information Theory Proceedings (ISIT), 2011 IEEE
  International Symposium on}, pages 293--297. IEEE, 2011.

\bibitem{quinn2013efficient}
Christopher~J Quinn, Negar Kiyavash, and Todd~P Coleman.
\newblock Efficient methods to compute optimal tree approximations of directed
  information graphs.
\newblock {\em IEEE Transactions on Signal Processing}, 61(12):3173--3182,
  2013.

\bibitem{royden1988real}
Halsey~Lawrence Royden and Patrick Fitzpatrick.
\newblock {\em Real analysis}, volume 198.
\newblock Macmillan New York, 1988.

\bibitem{shahrampour2015topology}
Shahin Shahrampour and Victor~M Preciado.
\newblock Topology identification of directed dynamical networks via power
  spectral analysis.
\newblock {\em IEEE Transactions on Automatic Control}, 60(8):2260--2265, 2015.

\bibitem{Singh}
H~Singh, N~Misra, and V~Hnizdo.
\newblock Nearest neighbor estimators of entropy.
\newblock {\em The Annals of Statistics}, 2005.

\end{thebibliography}

\end{document}